\providecommand{\norm}[1]{\left\lVert#1\right\rVert}
\newcommand{\br}[1]{\left\{#1\right\}}
\newcommand{\REAL}{\ensuremath{\mathbb{R}}}
\newcommand{\eps}{\varepsilon}
\newtheorem{theorem}{Theorem}
\newtheorem{definition}{Definition}
\newtheorem{claim}{Claim}
\newtheorem{corollary}{Corollary}
\newcommand{\dist}{\mathrm{dist}}
\newcommand{\SVDCoreset}{\textsc{SVD-Coreset}}
\newcommand{\set}[1]        {\left\{ #1 \right\}}
\newcommand{\rbr}[1]      {\left( #1 \right)}
\def\norm#1{\|#1\|}
\begin{document}
\date{}
\title{Dimensionality Reduction of Massive Sparse Datasets Using Coresets}
\author{Dan Feldman, Mikhail Volkov, and Daniela Rus}
\maketitle


\begin{abstract}
In this paper we present a practical solution with performance guarantees to the problem of dimensionality reduction for very large scale sparse matrices.  We show applications of our approach to computing the low rank approximation (reduced SVD) of such matrices. Our solution uses coresets, which is a subset of $O(k/\eps^2)$ scaled rows from the $n\times d$ input matrix, that approximates the sub of squared distances from its rows to every $k$-dimensional subspace in $\REAL^d$, up to a factor of $1\pm\eps$. An open theoretical problem has been whether we can compute such a coreset that is independent of the input matrix and also a weighted subset of its rows. 
We answer this question affirmatively. 
Our main technical result is a novel technique for deterministic coreset construction that is based on a reduction to the problem of $\ell_2$ approximation for item frequencies. 
\end{abstract} 
\section{Introduction}
\label{sec:introduction}

Algorithms for dimensionality reduction usually aim to project an
input set of $d$-dimensional vectors (database records) onto a $k\leq
d-1$ dimensional affine subspace that minimizes the sum of squared
distances to these vectors, under some constraints. Special cases
include Principle Component Analysis (PCA), Linear regression
($k=d-1$), Low-rank approximation ($k$-SVD), Latent Drichlet Analysis
(LDA) and Non-negative Matrix Factorization (NNMF). Learning
algorithms such as $k$-means clustering can then be applied on the
low-dimensional data to obtain fast approximations with provable
guarantees.

However, there are still no practical algorithms with
provable guarantees that compute dimension reduction for sparse
large-scale data. Much of the large scale high-dimensional data sets
available today (e.g. image streams, adjacency matrices of graphs and social networks, text streams, etc.) are
sparse. For example, consider the text case of the English Wikipedia. We can
associate a matrix with the Wikipedia, where the (usually English) words define
the columns (approximately 8 millions terms) and the individual documents
define the rows (approximately 3.7 million documents).
This large scale matrix is sparse because every document uses a very small fraction of the possible English words.
 %

In this paper we present an algorithm for dimensionality reduction of very large scale sparse data sets such as the Wikipedia with provable approximations.  Our approach uses coresets to solve the time and space challanges.

Given a matrix $A$, a coreset $C$ in this paper is defined as a
weighted subset of rows of $A$ such that the sum of squared distances from any given $k$-dimensional subspace to the rows of $A$ is approximately the same as the sum of squared weighted distances to the rows in $C$. Formally,
\begin{definition}[$(\eps,k)$-Coreset]\label{def}
Given a $n\times d$ matrix $A$ whose rows $a_1,\cdots,a_n$ are $n$ points (vectors) in $\REAL^d$, an error parameter $\eps\in(0,1]$, and an integer $k\in[1,d-1]$ that represents the desired dimensionality reduction, an $(\eps,k)$-coreset for $A$ is a weighted subset $C=\br{w_i a_i \mid w_i>0 , i\in\br{1,\cdots,n}}$ of the rows of $A$, where $w=(w_1,\cdots,w_n)\in[0,\infty)^n$ is a non-negative weight vector such that for every $k$-subspace $S$ in $\REAL^d$ we have
\[
\begin{split}
&\left|\sum_i (\dist(a_i,S))^2-\sum_i (\dist(w_i a_i,S))^2\right|\\
&\leq \eps \sum_i (\dist(a_i,S))^2.
\end{split}
\]
\end{definition}

That is, the sum of squared distances from the $n$ points to $S$ approximates the sum of squared weighted distances
$\sum_{i=1}^n w^2_i(\dist(a_i,S))^2$ to $S$. The approximation is up to a multiplicative factor of $1\pm\eps$.
By choosing $w=(1,\cdots,1)$ we obtain a trivial $(0,k)$-coreset. However, in a more efficient coreset most of the weights will be zero and the corresponding rows in $A$ can be discarded. The cardinality of the coreset is thus the sparsity of $w$
\[
|C|=\norm{w}_0:=|\br{w_i\neq 0\mid i\in\br{1,\cdots,n}}|.
\]

If $C$ is small, then the computation is efficient. Because $C$
is a weighted subset of the rows of $A$, if $A$ is sparse, then $C$ is also sparse. A long-open
research question has been whether we can have such a coreset that is both \emph{of size} independent of the input dimension ($n$ and $d$) and a
\emph{subset of the original input rows}. In this paper we answer this
question affirmatively as follows.
\begin{theorem}\label{thm:one}
For every input matrix $A\in\REAL^{n\times d}$, a parameter error $\eps\in(0,1]$ and an integer $k\geq1$, there is a  $(k,\eps)$-coreset of size $|C|=O(k/\eps^2)$; see Definition~\eqref{def}.
\end{theorem}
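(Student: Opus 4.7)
The plan is to reduce the coreset guarantee of Definition 1 to a deterministic sparse-sum approximation in the space of symmetric matrices, and to apply a Frank--Wolfe-style sparsifier as suggested by the abstract's reference to $\ell_2$ item-frequency approximation.

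First, I would reformulate the coreset condition in quadratic form. Since $(\dist(a_i,S))^2 = a_i^T(I - VV^T)a_i$ for any orthonormal basis $V \in \REAL^{d\times k}$ of $S$, the requirement of Definition 1 is equivalent to
\[
\bigl|\langle I - VV^T,\; M - \tilde M\rangle\bigr| \leq \eps\,\langle I - VV^T,\; M\rangle
\]
for every such $V$, where $M = \sum_i a_i a_i^T$ and $\tilde M = \sum_i w_i^2 a_i a_i^T$. The problem becomes: find nonnegative weights $w_i^2$ supported on $O(k/\eps^2)$ indices so that $\tilde M$ approximates $M$ in this ``rank-$(d-k)$ projection'' norm.

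Second, I would cast this as an $\ell_2$ item-frequency problem. After normalizing each rank-one summand to $u_i = a_i a_i^T / \|a_i\|^2$ (a unit-Frobenius PSD matrix), the average $M/\|A\|_F^2 = \sum_i (\|a_i\|^2/\|A\|_F^2)\,u_i$ is a convex combination of the $u_i$. The deterministic Frank--Wolfe / Clarkson sparsification yields an $O(1/\eps^2)$-sparse nonnegative combination within Frobenius distance $\eps$ of this mean, translating to $\|M - \tilde M\|_F \leq \eps\|A\|_F^2$. Cauchy--Schwarz then gives only $|\langle I - VV^T, M - \tilde M\rangle| \leq \eps\sqrt{d-k}\,\|A\|_F^2$, which is both the wrong quantity on the right-hand side and has no factor $k$ in the sparsity.

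The main obstacle, and what forces the factor $k$ in the coreset size, is converting this into the multiplicative bound against the possibly tiny residual $\langle I - VV^T, M\rangle$. The fix I would attempt is an SVD-based decomposition: compute the top-$k$ right singular subspace of $A$ and arrange for $\tilde M$ to match $M$ \emph{exactly} on that subspace, reducing the remaining problem to approximating the sum only in the orthogonal $(d-k)$-dimensional directions, where the total Frobenius mass equals $\min_S \sum_i (\dist(a_i,S))^2$. In that restricted problem an absolute Frobenius bound becomes a multiplicative one, and feeding the correctly normalized rank-one residuals into the Frank--Wolfe sparsifier---applied in the effectively $O(k)$-dimensional span of the relevant test projections---should produce support of size $O(k/\eps^2)$. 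The delicate bookkeeping step, which I expect to be the crux, is ensuring that after combining the top-$k$ exact matching with the bottom Frank--Wolfe sparsification, the final weights act on the \emph{original} rows $a_i$ (rather than on their projections onto the top or bottom singular subspaces) and remain nonnegative, while keeping $\|w\|_0 = O(k/\eps^2)$ as claimed.
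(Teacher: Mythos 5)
Your setup is on target: rewriting the coreset condition as a bound on $\langle I-VV^T, M-\tilde M\rangle$, diagnosing why a plain Frobenius sparsification of the $a_ia_i^T$ only yields an additive error relative to $\|A\|_F^2$, and normalizing the tail by $\min_S\sum_i\dist^2(a_i,S)=\|\Sigma_{k+1:d,k+1:d}\|_F^2$ so that an absolute bound becomes a multiplicative one --- all of this matches the paper's strategy. But the step you defer as ``delicate bookkeeping'' is the entire content of the paper's main technical claim, and your specific plan for it --- match $M$ \emph{exactly} on the top-$k$ right singular subspace and run Frank--Wolfe only on the orthogonal residuals --- does not go through as stated. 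A coreset is a single nonnegative weight vector applied to whole rows, so you cannot choose one set of weights to reproduce the top-$k$ block exactly and an independent set to sparsify the tail; exact conic matching on a $k$-dimensional subspace by a shared sparse subset is itself a Caratheodory-type problem your budget does not obviously accommodate. Moreover, $\|A_{i,:}X\|^2$ is not the sum of a top part and a tail part: it contains the cross term $2A_{i,1:k}X_{1:k,:}(X_{k+1:d,:})^TA_{i,k+1:d}^T$ coupling the two subspaces, and this is the hardest piece to control against $\|AX\|^2$; your proposal never addresses it.

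The paper's resolution is to fold everything into one Frank--Wolfe run: for each row it forms a single vector $v_i=\bigl(U_{i,1:k},\ U_{i,k+1:d}\Sigma_{k+1:d,k+1:d}/\|\Sigma_{k+1:d,k+1:d}\|,\ 1\bigr)$ and proves (Claim~\ref{claima}) that for every orthonormal $X\in\REAL^{d\times(d-k)}$ one has $\bigl|1-\|WAX\|^2/\|AX\|^2\bigr|\le 5\,\|\sum_i(1-W_{i,i}^2)v_iv_i^T\|$. The proof bounds the top block using $\sum_j\sigma_j^2\|X_{j,:}\|^2\le\|AX\|^2$, the tail block using $\|\sigma_{k+1:d}\|^2\le\|AX\|^2$, and the cross block via Cauchy--Schwarz together with $2ab\le\eps a^2+b^2/\eps$, each piece being a sub-block of the one matrix $\sum_i(1-W_{i,i}^2)v_iv_i^T$; sparsifying that single matrix sum with Theorem~\ref{thm1} therefore controls all three simultaneously with one weight vector. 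Note also that the factor $k$ in the coreset size comes from the total norm mass of the $v_iv_i^T$ fed to the sparsifier (essentially $\sum_i\|U_{i,1:k}\|^2=k$), not from the dimension of the span of the test projections as you suggest. Until you specify how a single sparse nonnegative $w$ handles the top, tail, and cross blocks together, the argument is incomplete.
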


Our proof is constructive and we provide an efficient algorithm for computing such a coreset $C$.

\subsection{Why Coresets?}
\label{sec:coreset}
The motivation for using data reduction technique, and coresets as defined above in particular, is given below.

\textbf{Fast approximations.}
The $k$-subspace that minimizes the sum of squared distances to the vectors in $C$, will minimize the sum of squared distances to the rows of $A$, up to a factor of $(1\pm\eps)$, over every $k$-subspace in $\REAL^d$.
If the $\eps$-coreset $C$ is small, it can used to efficiently compute the low-rank approximation (reduced SVD) of the original matrix. More generally, using an $\eps$-coreset we can compute the optimal $k$-subspace $S$ under some given constraints, as desired by several popular dimensionality reduction techniques.

\textbf{Streaming and parallel computation.} An algorithm for constructing an $\eps$-coreset off-line on a single machine, can be easily turned into an algorithm that maintains an $\eps$-coreset $C$ for a (possibly unbounded) stream of $d$-dimensional input vectors, $a_1,\cdots,a_n$  where $n$ is the number of vectors seen so far, using one pass over these vectors and $O(|C|\log n)$ memory. 
Using the same merge-and-reduce technique, we can compute $C$ in a way known as embarrassingly parallel on distributed $M$ machines on a cloud, network or GPGPU, and reduce the construction time of $C$ by a factor of $M$. In both the streaming and parallel models, the size of the coreset $C$ will be increased where $\eps$ is replaced by $O(\eps/\log n)$ in the memory and running time computations. For our specific technique of the Frank-Wolfe algorithm, similar streaming versions do not introduce the $\log n$ factor.

\textbf{Applications to sparse data.}
Let $s$ denote the sparsity, i.e., maximum non-zero entries, in a row of $A$, over all of its $n$ rows. Note that this definition is different than the more common definition of number of non-zeroes entries (nnz) in the entire matrix. The memory (space) that is used by $C$ is then $O(|C|\cdot m)$ words (real numbers). In particular, if the cardinality of $C$ is independent of the dimensions $n$ and $d$ of $A$, then so is the required memory to store $C$ and to compute its low rank approximation.
This property allows us to handle massive datasets of unbounded dimensions and rows, with the natural limitation on the sparsity of each row.

\textbf{Sparse reduced SVD.} A lot of algorithms and papers aim to compute a low rank approximation which can be spanned by few input points.  In particular, if each input point (row) is sparse then this low rank approximation will be sparse.
For example, in the case of text mining, each topic (singular vector) will be a linear combination of few words, and not all the words in the dictionary.
We get this property ``for free" in the case of coresets: if the coreset consists of few input rows, its optimal $k$-subspace ($k$-rank approximation) will also be spanned by few input points. The optimal $k$-subspace of the sketches presented in Section \ref{sec:related} do not have this property in general.

\textbf{Interpretation and Factorization.} Since coreset consists of few weighted rows, it tells us which records in the data are most importance, and what is their importance. For example, a coreset for the Wikipedia tells which documents are most important for approximating the reduced SVD (Latent Semantic Analysis).
This and the previous property are the motivations behind the well known $CX$ and $CUR$ decompositions, where $C$ is a subset of columns from the transpose input matrix $A^T$, $R$ is a subset of rows from $A^T$. A coreset represents an even simpler $WA$ decomposition, where $W$ is a diagonal sparse matrix that consists of the weights $w_1,\cdots,w_n$ in its diagonal, or a  $DR$ decomposition where $D$ is a diagonal matrix that consists of the non-zero rows of $W$, and $R$ is a subset of rows from $A$.

\subsection{Our contribution}
Our main result is the first algorithm for computing an $(\eps,k)$-coreset $C$ of size independent of both $n$ and $d$, for any given $n\times d$ input matrix .
Until today, it was not clear that such a coreset exists. The polynomial dependency on $d$ of previous coresets made them useless for fat or square input matrices, such as Wikipedia, images in a sparse feature space representation, or adjacency matrix of a graph.
The output coreset has all the properties that are discussed in Section~\ref{sec:coreset}.
We summarize this result in Theorem~\ref{thm:one}.

\textbf{Open problems.} We answer in this paper two recent open problems: coresets of size independent of $d$ for low rank approximation~\cite{ghashami2015frequent} and for $1$-mean queries~\cite{clarkson2010coresets}.
Both of the answers are based on our new technical tool for coreset construction; see Section~\ref{novel}.

\textbf{Efficient construction.} We suggest a deterministic algorithm that efficiently compute the above coreset for every given matrix. The construction time is dominated by the time it takes to compute the $k$-rank approximation for the input matrix. 
Any such $(\eps,k)$-coreset can be computed for both the parallel and streaming models as explained in~\cite{FSS13}, or using existing streaming versions for the Frank-Wolfe algorithm.

\textbf{Efficient Implementation.}
Although our algorithm runs on points in $R^{d^2}$ that represent $d\times d$ matrices, but we show how it can actually be implemented using operations in $\REAL^d$, similarly to learning kernels techniques. The running time of the algorithm is dominated by the time it takes to compute a constant factor approximation to the $k$-rank approximation of the input matrix, and the time it takes to approximates the farthest input point from a given query point in the input. Over the recent years several efficient algorithms were suggested for these problems using pre-processing, in particular for sparse data, (e.g.~\cite{clarkson2013low}, Fast JL and farthest nearest neighbour).
%

\subsection{Related Work}
\label{sec:related}

\textbf{Existing Software and implementations.}
We have tried to run modern reduced SVD implementations such as GenSim~\cite{gensym} that uses random projections or Matlab's \emph{svds} function that uses the classic LAPACK library that in turn uses a variant of the Power (Lanczoz) method for handling sparse data. All of them crashed for an input of few thousand of documents and $k<100$, as expected from the analysis of their algorithms. 
Even for $k=3$, running the implementation of svds in Hadoop~\cite{shvachko2010hadoop} (also a version of the Power Method) took several days~\cite{halko2012randomized}.

\textbf{Coresets.}
Following a decade of research, in~\cite{varadarajan2012sensitivity} it was recently proved that an $(\eps,k)$ coreset of size $|C|=O(dk^3/\eps^2)$ exists for every input matrix. The proof is based on a general framework for constructing different kinds of coresets, and is known as \emph{sensitivity}~\cite{FL10, LS10}.
This coreset is efficient for tall matrices, since its cardinality is independent of $n$. However, it is useless for ``fat" or square matrices (such as the Wikipedia matrix above), where $d$ is in the order of $n$, which is the main motivation for our paper. In~\cite{clarkson2010coresets}, the Frank-Wolfe algorithm was used to construct different types of coresets than ours, and for different problems. Our approach is based on a solution that we give to an open problem in~\cite{clarkson2010coresets}, however we can see how it can be used to compute the coresets in~\cite{clarkson2010coresets} and vice versa.

\textbf{Sketches.} 
A \emph{sketch} in the context of matrices is a set of vectors $u_1,\cdots,u_s$ in $\REAL^d$ such that the sum of squared distances $\sum_{i=1}^n (\dist(a_i,S))^2$ from the input $n$ points to \emph{every} $k$-dimensional subspace $S$ in $\REAL^d$, can be approximated by $\sum_{i=1}^n (\dist(u_i,S))^2$ up to a multiplicative factor of $1\pm\eps$.

Note that even if the input vectors $a_1,\cdots,a_n$ are sparse, the sketched vectors $u_1,\cdots,u_s$ in general are not sparse, unlike the case of coresets.
A sketch of cardinality $d$ can be constructed with no approximation error ($\eps=0$), by defining $u_1,\cdots,u_d$ to be the $d$ rows of the matrix $DV^T$ where $UDV^T=A$ is the SVD of $A$.
It was proved in~\cite{FSS13} that taking the first $O(k/\eps)$ rows of $DV^T$ yields such a sketch, i.e., of size independent of both $n$ and $d$.
Using the merge-and-reduced technique, this sketch can be computed in the streaming and parallel computation models. The final coreset size will be increased by a factor of $O(\log n)$ in this case.
For the streaming case, it was shown in~\cite{liberty2013simple, ghashami2015frequent} that such a strong sketch of the same size, $O(k/\eps)$, can be maintained without the additional $O(\log n)$ factor. The first sketch for sparse matrices was suggested in~\cite{clarkson2013low}, but like more recent results, it assumes that the complete matrix fits in memory.

\textbf{Lower bounds.} Recently,~\cite{ghashami2015frequent, liberty2013simple} proved a lower bound of $O(k/\eps)$ for the cardinality of a strong sketchs. A lower bound of $O(k/\eps^2)$ for strong coreset was proved for the special case $k=d-1$ in~\cite{batson2012twice}.
%

\textbf{The Lanczoz Algorithm.}
The Lanczoz method and its variant as the Power Method are based on multiplying a large matrix by a vector for a few iterations to get its largest eigenvector $v_1$. Then the computation is done recursively after projecting the matrix on the hyperplane that is orthogonal to $v_1$. Multiplying a matrix by a vector can be done easily in the streaming mode without having all the matrix in memory.
The main problem with running the Lanczoz method on large sparse data is that it is efficient only for the case $k=1$: the largest eigenvector $v_1$ of a sparse matrix $A$ is in general not sparse even $A$ is sparse. Hence, when we project $A$ on the orthogonal subspace to $v_1$, the resulting matrix is dense for the rest of the computations $(k>1)$.
Indeed, our experimental results show that the sparse SVD function in MATLAB which uses this method runs faster than the exact SVD, but crashes on large input, even for small $k$.

\section{Novel Coreset Constructions}
\label{novel}
Our main technical result is a new approach that yields coresets of size significantly lower than the state-of-the-art.
While this paper is focused on coresets for computing the reduced SVD, our approach implies a technique for constructing coresets that may improve most of the previous coreset constructions.

Recall that coresets as defined in this paper must approximate every query for a given set of queries. In this paper the queries are $k$-subspaces in $\REAL^d$. Unlike sketches, coresets must also be weighted subsets of the input.
In this sense, all existing coresets constructions that we aware of can be described as computing sensitivity (or leverage score, or importance) for each point, as described earlier, and then computing what is known as an $\eps$-sample; see~\cite{FL10}. For a given set $A$ of size $n$, and a set of queries (subsets) of $A$, a subset $S\subseteq A$ is an $\eps$-sample if for every query, the fraction $f$ of $A$ that is covered by the query is $\eps$-approximated by the fraction $\tilde{f}$ that it covers from $S$, i.e,
\begin{equation}\label{items}
|f-\tilde{f}|\leq \eps .
\end{equation}
Such $\eps$-sample can usually be constructed deterministically in time that is exponential in the pseudo-dimension $d$ of $A$ (a complexity measure of shapes that is similar to VC-dimension~\cite{FL10}) , or using uniform random sampling of size the is linear $d$. The second approach yields randomized constructions which is known to be sub-optimal compared to deterministic ones~\cite{batson2012twice}.

In this section we suggest a new gradient descent type of deterministic construction of coresets by reducing coreset problems to what we call Item Frequency $\ell_2$-Approximation. The name comes from the well known Item Frequency Approximation (IFA) problem that was recently used in~\cite{liberty2013simple} to deterministically compute a sketch of size $O(k/\eps)$ for the reduced SVD problem. Unlike the $\eps$-sample and previous deterministic approach, or approach yields deterministic coresets of size independent of the pseudo-dimension $d$. This property will turn into coresets of size independent of $d$ for the reduced SVD problem and also for the simpler $1$-mean coreset that will be described below.

To show the relation to~\eqref{items}, we present the IFA problem a bit differently than in existing papers. In the IFA problem there is a universe of $n$ items $I=\br{e_1, \cdots , e_d}$ and a stream $a_1, \cdots, a_n\in I$ of item appearances. The frequency $f_i$ of an item $a_i$ stands for the fraction of times $e_i$ appears in the stream (i.e., its number of occurrences divided by $n$).
It is trivial to produce all item frequencies using $O(d)$ space simply by keeping
a counter for each item. The goal is to use $O(1/\eps)$ space and
produce approximate frequencies $\tilde{f}_1,\cdots,\tilde{f}_n$ such that
\[
 | f_i - \tilde{f}_i| \leq \eps.
\]
Note the surprising similarity to $\eps$-samples in~\eqref{items}.

As stated there~\cite{liberty2013simple, ghashami2015frequent}, IFA is the fundamental tool that was used to construct sketches of size $O(k/\eps)$. However, it was also proved in~\cite{ghashami2015frequent} that this approach \emph{cannot} be used to construct coresets (weighted subsets). This was suggested there as an open problem. In this paper we solve this problem by generalizing the IFA problem as follows.

\textbf{IFA equals $\ell_{\inf}$-IFA.}
First observe that the input vectors for the IFA problem can be considered as the rows of the $d\times d$ identity matrix $I$. We then wish to approximate each entry of the resulting mean vector by a small weighted subset,
\[
\max_i  | f_i - \tilde{f}_i| =\left\lVert\frac{\sum_i a_i}{n}-\sum_i w_i a_i\right\rVert_{\infty}\leq \eps,
\]
where $w\in[0,\infty)^n$ is a non-negative sparse vector of weights, and $\norm{(x_1,\cdots,x_n)}_\infty=\max_i |x_i|$ for $x\in\REAL^n$ is known as the $\ell_{\infty}$ norm.
More generally we can replace $I$ by all the unit vectors in $\REAL^d$.

\textbf{$\ell_2$-IFA.}
In this paper we define the $\ell_2$ version of IFA as the problem of computing a sparse non-negative vector  $w\in[0,\infty)^n$ such that
\[
\left\lVert\frac{\sum_i a_i}{n}-\sum_i w_i a_i\right\rVert_{2}\leq \eps,
\]
where $\norm{(x_1,\cdots,x_n)}_2=\sqrt{\sum_{i=1}^n x_i^2}$ is known as the $\ell_2$, Frobenius, or Euclidean norm.

Let $D^n$ denote the union over every vector $z\in [0,1]^n$ that represent a distribution, i.e., $\sum_i z_i=1$.
Our first technical result is that for any finite set of unit vectors $a_1,\cdots,a_n$ in $\REAL^d$, any distribution $z\in D^n$, and every $\eps\in(0,1]$, we can compute a distribution $w\in D^n$ of sparsity (non-zeroes entries) $\norm{w}_0\leq 1/\eps^2$. This result is in fact straight-forward by applying the Frank-Wolfe algorithm~\cite{clarkson2010coresets}, after normalizing the input points.
\begin{theorem}[\label{thm1}Item Frequency $\ell_2$ Approximation]
Let $z\in D^n$ be a distribution over $n$ vectors $a_1,\cdots,a_n$ in $\REAL^d$.
There is a distribution $w\in D^n$ of sparsity $\norm{w}_0\leq 1/\eps^2$ such that
\[
\norm{\sum_i z_i \cdot a_i-\sum_i w_i a_i}_{2}\leq \eps.
\]
\end{theorem}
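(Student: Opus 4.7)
The plan is to recast the statement as a sparse approximation problem in the convex hull of $\{a_1,\ldots,a_n\}$ and then invoke the Frank-Wolfe algorithm. Let $\mu=\sum_i z_i a_i$. Because $z\in D^n$ is a distribution, $\mu$ lies in the convex hull $K=\mathrm{conv}(a_1,\ldots,a_n)$. The goal is therefore to produce $\hat\mu=\sum_i w_i a_i\in K$, with $w\in D^n$ of sparsity at most $1/\eps^2$, satisfying $\norm{\hat\mu-\mu}_2\le\eps$. I would define the convex quadratic objective $f(x)=\tfrac{1}{2}\norm{x-\mu}_2^2$ on $K$; since $\mu\in K$, we have $\min_{x\in K}f(x)=0$, so the task reduces to approximately minimizing $f$ on $K$ with a sparse convex combination.

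Next, I would run the classical Frank-Wolfe (conditional gradient) iteration starting from some vertex $a_{i_0}$: at step $t$, linearize $f$ at the current iterate $x_t$, and pick the vertex $a_{j_t}\in\arg\max_i \langle \mu-x_t,\,a_i\rangle$ (the vertex most correlated with the negative gradient); then update $x_{t+1}=(1-\gamma_t)x_t+\gamma_t a_{j_t}$ with step size $\gamma_t=2/(t+2)$ or the line-search optimum. Since each iterate is a convex combination of the previously selected vertices, after $T$ iterations $x_T$ is supported on at most $T+1$ of the $a_i$'s, so the corresponding weight vector $w\in D^n$ satisfies $\norm{w}_0\le T+1$.

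The convergence analysis (as in Clarkson, \cite{clarkson2010coresets}) gives $f(x_T)\le C_f/(T+2)$, where $C_f$ is the curvature constant of $f$ over $K$. For $f(x)=\tfrac12\norm{x-\mu}_2^2$ the Hessian is $I$, and because each $a_i$ is a unit vector the squared diameter of $K$ is at most $4$, so $C_f$ is an absolute constant. Hence after $T=\Theta(1/\eps^2)$ steps we get $\tfrac12\norm{x_T-\mu}_2^2\le O(\eps^2)$, i.e.\ $\norm{x_T-\mu}_2\le\eps$, while the support size of $w$ is $O(1/\eps^2)$; a careful choice of the absolute constant (or an initial rescaling of $\eps$ by a fixed factor) yields the clean bound $\norm{w}_0\le 1/\eps^2$ in the theorem.

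I do not anticipate any real obstacle: the only nontrivial ingredient is the standard $O(1/T)$ Frank-Wolfe convergence rate on a domain of bounded diameter, and this applies directly once we observe that (i) $\mu$ is in the hull, making the optimum zero, and (ii) unit norm of the $a_i$'s bounds the curvature constant by an absolute constant. The mild subtlety is bookkeeping the constants so that exactly $1/\eps^2$ vertices suffice, but this is absorbed by rescaling $\eps$ by a constant factor at the outset.
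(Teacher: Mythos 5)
Your proposal is correct and follows exactly the route the paper intends: the paper gives no detailed proof of Theorem~\ref{thm1}, stating only that it follows from the Frank-Wolfe algorithm of~\cite{clarkson2010coresets} after normalizing the input points, which is precisely your reduction to minimizing $\tfrac12\norm{x-\mu}_2^2$ over the convex hull with the $O(1/T)$ conditional-gradient rate. The only minor point is that the generic curvature-constant bound loses a constant factor relative to the clean $1/\eps^2$ claimed (the sharper Maurey-type analysis for unit vectors, where $\mathbb{E}\norm{a-\mu}^2\le 1$, gives the exact constant), but you correctly flag that rescaling $\eps$ absorbs this.
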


%

\textbf{The Caratheodory Theorem} proves Theorem~\ref{thm1} for the special case $\eps=0$ using only $d+1$ points. Our approach and algorithms for $\ell_2$-IFA can thus be considered as $\eps$-approximations for the Caratheodory Theorem, to get coresets of size independent of $d$. Note that a Frank-Wolfe-style algorithm might run more than $d+1$ or $n$ iterations without getting zero error, since the same point may be selected in several iterations. Computing in each iteration the closest point to the origin that is \emph{spanned} by {all} the points selected in the previous iterations, would guarantee coresets of size at most $d+1$, and fewer iterations. Of course, the computation time of each iteration will be much slower in this case. Due to lack of space we omit further details.


\subsection{Warm up: reduction to $1$-mean $\eps$-coresets.}
Given a set $a_1,\cdots,a_n$ of $n$ points in $\REAL^d$, and $\eps\in(0,1)$, we define a \emph{$1$-mean $\eps$-coreset} to be a weight vector $w\in[0,\infty)^n$ such that the sum of squared distances $\sum_{i=1}^n \norm{a_i-c}^2$ for every center $c\in\REAL^d$ to these input points is the same as the weighted sum $\sum_{i=1}^n w_i\norm{a_i-c}^2$ of the weighted points, up to a $(1\pm\eps)$ multiplicative factor.

Using the sensitivity framework~\cite{FL10}, we can compute such a coreset that has size $O(d/\eps^2)$ with high probability~\cite{LS10}. A corollary to this also answers the open problem in~\cite{clarkson2010coresets}.
Although the approximation property should hold for every center $c$, we reduce the $1$-mean coreset problem to the $\ell_2$ IFA problem, whose solution depends only on the input points. By scaling and normalizing the input points, we prove that Theorem~\ref{thm1} can be used to deterministically compute the first coreset of size independent of $d$ for the $1$-mean problem in time that is not exponential in $1/\eps$.

\begin{corollary}[$1$-mean $\eps$-coreset]
For every set of of $n$ vectors in $\REAL^d$ and every $\eps\in(0,1)$ a $1$-mean $\eps$-coreset of size $O(1/\eps^2)$ can be computed deterministically in $O(nd/\eps)$ time.
\end{corollary}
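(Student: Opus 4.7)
The plan is to reduce the $1$-mean coreset construction to a single application of Theorem~\ref{thm1} via a moment-matching lift. First I would apply two affine normalizations to the input: translate so that the centroid $\mu=\tfrac{1}{n}\sum_i a_i$ is the origin, and rescale so that $\sigma^2:=\sum_i\|a_i\|^2=1$. Neither step affects the $(1\pm\eps)$ coreset property, since translating only reparameterizes the query center $c$ and rescaling multiplies every squared distance by the same positive constant. After these reductions the original cost at center $c$ is $1+n\|c\|^2$.

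For any non-negative weight vector $w$ the coreset cost expands as
\[
\sum_i w_i\|a_i-c\|^2 = W\|c\|^2 - 2\langle c,B\rangle + A,
\]
with the three moments $W=\sum_i w_i$, $B=\sum_i w_i a_i$, and $A=\sum_i w_i\|a_i\|^2$. A direct estimate (triangle inequality on the three summands and Cauchy--Schwarz on the cross term, evaluated at the worst case $\|c\|=1/\sqrt{n}$) shows that the $(1\pm\eps)$ guarantee is implied by the three simultaneous conditions $|W-n|\le\eps n$, $|A-1|\le\eps$, and $\|B\|\le\eps\sqrt{n}$. The natural way to bundle these into a single $\ell_2$ discrepancy is to lift each input point to $p_i=(a_i,\|a_i\|^2,1)\in\REAL^{d+2}$, rescaling the three blocks of coordinates so that the per-block tolerances become equal; the target moment vector is then $\sum_i p_i=(0,1,n)$ and the weighted one is $\sum_i w_i p_i=(B,A,W)$.

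I would then apply Theorem~\ref{thm1} to the (normalized) $p_i$ with the uniform distribution $z_i=1/n$ to obtain a distribution $\tilde w\in D^n$ of sparsity at most $1/\eps^2$ whose convex combination $\sum_i\tilde w_i p_i$ matches $\sum_i z_i p_i$ in $\ell_2$ to within $\eps$. The final coreset weights are $w_i=n\tilde w_i$, supported on at most $1/\eps^2$ input points; plugging the resulting $(W,B,A)$ back into the cost expansion verifies the coreset guarantee.

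The main obstacle is precisely this simultaneous control of three moment errors whose natural scales ($\eps n$ for $W$, $\eps$ for $A$, $\eps\sqrt{n}$ for $B$) are genuinely different, so that a naive lift-and-rescale would lose a factor of $n$ in the bottleneck coordinate $A$. The resolution is the ``scaling and normalizing'' step: one chooses the relative weights of the three coordinate blocks of $p_i$ so that a single $\ell_2$ error of $\eps$ from Theorem~\ref{thm1} translates into the correct per-moment bounds, while the normalized $p_i$ remain of diameter $O(1)$ under the assumption $\sigma^2=1$ (so $\|a_i\|\le 1$ for all $i$). This preserves the sparsity bound $O(1/\eps^2)$ promised by Theorem~\ref{thm1}. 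Finally, the runtime $O(nd/\eps)$ follows because each Frank--Wolfe iteration underlying Theorem~\ref{thm1} reduces to a linear scan of the $n$ lifted vectors in $\REAL^{d+2}$ with $O(d)$ arithmetic per point.
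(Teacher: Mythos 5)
Your skeleton matches the paper's intended route (the paper only sketches this corollary in one sentence: lift, normalize, and invoke Theorem~\ref{thm1}), and your decomposition of the cost into the three moments $W$, $B$, $A$ with the tolerances $\eps n$, $\eps\sqrt{n}$, $\eps$ is correct. The gap is in the step you flag as ``the main obstacle'' and then claim to resolve: the assertion that one can rescale the three coordinate blocks of $p_i=(a_i,\|a_i\|^2,1)$ so that a single $\ell_2$ error of $\eps$ yields all three per-moment bounds \emph{while} the lifted points keep diameter $O(1)$. These two requirements are incompatible under your normalization $\sum_i\|a_i\|^2=1$ with uniform $z_i=1/n$. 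To turn an $\ell_2$ error budget $\delta$ into the required bound $|\sum_i\tilde w_i\|a_i\|^2-1/n|\le\eps/n$ you must scale the second-moment coordinate by $\beta\ge\delta n/\eps$; but then a point with $\|a_i\|^2=\Theta(1)$ (e.g.\ $a_1=e_1/\sqrt2$, $a_2=-e_1/\sqrt2$, $a_3=\cdots=a_n=0$, which satisfies both of your normalizations) has lifted norm $\Theta(\beta)=\Theta(n\delta/\eps)$. Theorem~\ref{thm1} is stated for unit vectors, so after renormalizing by this radius $R$ its sparsity guarantee becomes $R^2/\delta^2=\Theta(n^2/\eps^2)$, not $O(1/\eps^2)$. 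The bound $\|a_i\|\le1$ that you cite does not help: the problem is that $\|a_i\|^2$ can exceed the \emph{average} $1/n$ by a factor of $n$, and Frank--Wolfe's iteration count is governed by that spread.

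The missing ingredient is to exploit the fact that Theorem~\ref{thm1} allows an \emph{arbitrary} distribution $z$, not just the uniform one: normalize each lifted point individually to a unit vector $v_i=(a_i,1)/\|(a_i,1)\|$ (or the outer product $v_iv_i^T$) and set $z_i\propto\|(a_i,1)\|^2=\|a_i\|^2+1$, recovering the coreset weights as $w_i=\tilde w_i/(n z_i)$ up to a global constant. Equivalently, normalize so that $\sum_i\|a_i\|^2=n$ rather than $1$; then the total cost is $n(1+\|c\|^2)$ and all three moment tolerances become $\eps n$, so a single error of $\eps$ on the unit vectors suffices and the sparsity $O(1/\eps^2)$ survives. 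Separately, your runtime accounting does not support the stated bound even on its own terms: $1/\eps^2$ Frank--Wolfe iterations at $O(nd)$ each give $O(nd/\eps^2)$, not $O(nd/\eps)$.
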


\section{Reduction from Coreset to $\ell_2$-Item Frequency Approximation}
In the previous section we suggested the problem of $\ell_2$-IFA and several algorithms to solve it.
We then suggested a new construction for $1$-mean coreset that is based on a simple reduction to the $\ell_2$-IFA problem using a scaled versions of the input points.
In this section we prove a more involved reduction from the main problem of this paper: computing a $(k,\eps)$-coreset for the reduced SVD problem of a matrix $A$. In this case the input to the $\ell_2$-IFA problem is a set of $n$ $d\times d$  matrices, i.e., vectors in $\REAL^{d\times d}$. Each such matrix has the form $v_iv_i^T$ where $v_i$ is related to the $i$th row of the $U$ matrix in the Singular Value Decomposition $UDV^T$ of $A$. The proof of Theorem~\ref{thm:one} then follows by bounding the right hand side of~\eqref{rrr} using Theorem~\ref{thm1}. This is done by normalizing the input set, where the mean is translated and scaled to be the vector $(1,\cdots,0)$. To get an $\eps$-additive error after this scaling, the number of iterations is at most the sum of the norms of the matrices $v_iv_i^T$ divided by $\eps^2$, which yields a coreset of size $k/\eps^2$. 

\textbf{Notation.}
We denote by $\REAL^{n\times d}$ the set of all $n\times d$ matrices. For a given integer $i\geq 1$ we denote $[n]=\br{1,\cdots,n}$. For a pair of indexes (integers) $i\in[n]$, $j\in[d]$ and a given matrix $X\in\REAL^{n\times d}$  we denote by $X_{i,j}$ its entry in the $i$th row and $j$th column. As in Matlab, we denote by $j:k$ the set of indexes $\br{j,j+1,\cdots,k}$ for an integer $k\geq j$.
The $i$th row of $X$ is denoted by $X_{i,:}\in\REAL^{1\times d}$ and the $j$th column by $X_{:,j}\in\REAL^{n\times 1}$.
The $i$th entry of a vector $x=(x_1,\cdots,x_d)$ is denoted by $x_i$.

The Frobenius norm (root of squared entries) of a matrix or a vector $X$ is denoted by
$\norm{X}_F=\norm{X}=\sqrt{\sum_{i=1}^n\sum_{j=1}^d X_{i,j}^2}$. The
identity matrix is denoted by $I$ and the
matrix whose entries are all zeroes is denoted by $\mathbf{0}$. The size of $I$ and $\mathbf{0}$ is determined by the context. 
The inner product of a pair of matrices $X, Y\in\REAL^{n\times d}$ is denoted by $X\bullet Y=\sum_{i=1}^n\sum_{j=1}^d X_{i,j}Y_{i,j}.$

We are now ready to prove the reduction from coresets to the Item Frequency $\ell_2$ Approximation problem, as follows.
Theorem~\ref{thm:one} then follows by using 

\begin{claim}\label{claima}
Let $A\in\REAL^{n\times d}$ be a matrix of rank $d$, and let $U\Sigma V^T=A$ denote its full SVD. 
Let $k\in[1,d-1]$ be an integer, and for every $i\in[n]$ let
\begin{equation}\label{rrr}
v_i=\left(U_{i,1},\cdots,U_{i,k},\frac{U_{i,k+1:d}\Sigma_{k+1:d,k+1:d}}{\norm{\Sigma_{k+1:d,k+1:d}}},
1\right).
\end{equation}
Let $W\in\REAL^{n\times n}$ be a diagonal matrix.
Then for every $X\in\REAL^{d\times (d-k)}$ such that $X^TX=I$ we have
\[
\left|1-\frac{\norm{WAX}^2}{\norm{AX}^2}\right|
\leq
5\cdot \norm{\sum_{i=1}^n (1-W_{i,i})v_iv_i^T}.
\]
\end{claim}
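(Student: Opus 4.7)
The plan is to express the numerator $\|AX\|^2-\|WAX\|^2$ as a Frobenius inner product against $M := \sum_{i=1}^n (1-W_{i,i})v_iv_i^T$, and then bound that inner product block by block via Cauchy--Schwarz and trace inequalities. First rotate into SVD coordinates by setting $Y := V^T X \in \REAL^{d\times(d-k)}$; since $V$ is orthogonal and $X^TX = I$, we have $Y^TY = I_{d-k}$. Partition $Y$ into its top $k$ rows $Y_1$ and bottom $d-k$ rows $Y_2$, and set $\sigma := \|\Sigma_{k+1:d,k+1:d}\|$. Then $a_iX = U_{i,:}\Sigma Y = p_i + q_i$ with $p_i := U_{i,1:k}\Sigma_{1:k,1:k}Y_1$ and $q_i := U_{i,k+1:d}\Sigma_{k+1:d,k+1:d}Y_2$.

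Define the auxiliary matrix $C \in \REAL^{(d+1)\times(d-k)}$ whose top, middle, and bottom blocks are $\Sigma_{1:k,1:k}Y_1$, $\sigma Y_2$, and the zero row. The $1/\sigma$ normalization inside the middle block of $v_i$ cancels the $\sigma$ in the middle block of $C$, so $v_iC = p_i + q_i = a_iX$ and hence $\|a_iX\|^2 = (CC^T)\bullet(v_iv_i^T)$. Summing against $(1-W_{i,i})$ gives the key identity $\|AX\|^2 - \|WAX\|^2 = (CC^T) \bullet M$. Since $CC^T$ has only three nonzero blocks, this expands as $(CC^T)\bullet M = T_1 + 2T_2 + T_3$, mirroring the expansion $\|p_i+q_i\|^2 = \|p_i\|^2 + 2\langle p_i,q_i\rangle + \|q_i\|^2$.

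For $T_1$, Cauchy--Schwarz combined with $\|\Sigma_{1:k,1:k}Y_1\|^2 \leq \|AX\|^2$ yields $|T_1| \leq \|AX\|^2\|M\|$. For $T_2$ I would additionally use $\sigma \leq \|AX\|$ -- which follows from $\|AX\|^2 \geq \min_X\|AX\|^2 = \sigma^2$ -- together with $\|Y_2\|_{\mathrm{op}} \leq 1$ -- which follows from $Y_1^TY_1 + Y_2^TY_2 = I_{d-k}$ -- to obtain $|T_2| \leq \|AX\|^2\|M\|$.

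The main obstacle I anticipate is bounding $T_3 = \sigma^2 M_{22}\bullet Y_2Y_2^T$: a naive Cauchy--Schwarz on $\sigma^2\|Y_2Y_2^T\|$ is not dimension-independent, since $\|Y_2Y_2^T\|$ can scale as $\sqrt{d-k}$. The right move is to rewrite $T_3 = N_{22}\bullet RR^T$ with $R := \Sigma_{k+1:d,k+1:d}Y_2$ and $N_{22} := \sum_i(1-W_{i,i})U_{i,k+1:d}^T U_{i,k+1:d}$, then use $\|R\|^2 \leq \|AX\|^2$ and a H\"older-type trace inequality to transfer the remaining work onto $\|N_{22}\|_{\mathrm{op}}$. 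That last quantity is controlled against $\|M\|$ via the explicit rescaling $M_{22} = \Sigma_{k+1:d,k+1:d}N_{22}\Sigma_{k+1:d,k+1:d}/\sigma^2$ and $\|U_{:,k+1:d}\|_{\mathrm{op}} \leq 1$. Optimized correctly, this yields $|T_3| \leq 2\|AX\|^2\|M\|$; summing $|T_1| + 2|T_2| + |T_3| \leq 5\|AX\|^2\|M\|$ and dividing through by $\|AX\|^2$ delivers the factor $5$ in the claim.
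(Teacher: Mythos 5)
Your overall architecture matches the paper's: pass to SVD coordinates, split $\|a_iX\|^2$ into head, cross, and tail contributions, and bound each against $\|M\|\cdot\|AX\|^2$. Your bounds on $T_1$ and $T_2$ are sound, and in fact cleaner than the paper's (which routes the head term through a subspace-distance lemma and the cross term through $2ab\le a^2+b^2$; your Cauchy--Schwarz with $\|\Sigma_{1:k,1:k}Y_1\|^2\le\|AX\|^2$, $\sigma\le\|AX\|$, $\|Y_2\|_{\mathrm{op}}\le 1$ gives the same $3\|AX\|^2\|M\|$ total more directly). A small bookkeeping point: $\|AX\|^2-\|WAX\|^2=\sum_i(1-W_{i,i}^2)\|a_iX\|^2$, so the correct coefficient is $1-W_{i,i}^2$, not $1-W_{i,i}$; the paper's own proof silently switches to $t_i=1-W_{i,i}^2$.

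The genuine gap is in $T_3$, exactly at the step you flag as the main obstacle. Your plan is $|T_3|\le\|N_{22}\|_{\mathrm{op}}\|R\|_F^2$ followed by $\|N_{22}\|_{\mathrm{op}}\lesssim\|M\|$ via the rescaling $M_{22}=\Sigma_{k+1:d,k+1:d}\,N_{22}\,\Sigma_{k+1:d,k+1:d}/\sigma^2$. But that rescaling goes the wrong way: inverting it gives $\|N_{22}\|_{\mathrm{op}}\le(\sigma^2/\sigma_d^2)\|M_{22}\|_{\mathrm{op}}$, and $\sigma/\sigma_d$ is unbounded. Concretely, if $\sigma_d$ is tiny, the direction $e_{d-k}$ contributes $\sum_i t_iU_{i,d}^2$ to $N_{22}$ but only $\delta^2\sum_i t_iU_{i,d}^2/\sigma^2$ to $M_{22}$; since the $t_i=1-W_{i,i}^2$ have mixed signs, one can arrange all entries of $M$ to be $O(\delta)$ while $\|N_{22}\|_{\mathrm{op}}=\Omega(1)$, so no inequality $\|N_{22}\|_{\mathrm{op}}\le c\|M\|$ holds. (It would hold if all $t_i\ge 0$, via $M\succeq t_iv_iv_i^T$ and $\|v_i\|\ge 1$, but coreset weights $w_i>1$ make some $t_i$ negative.) The tell is that your matrix $C$ multiplies the appended last coordinate of $v_i$ by a zero row, so that coordinate plays no role in your argument --- yet it is precisely the mechanism the paper uses for the tail term: via the Pythagorean complement, $\sum_it_i\|v_{i,k+1:d}Y_2\|^2=\sum_it_i\|v_{i,k+1:d}\|^2-\sum_it_i\|v_{i,k+1:d}Y_2^{\perp}\|^2$, and the first sum is read off as the single $(d{+}1,d{+}1)$ entry of $M$ (which requires the last coordinate of $v_i$ to equal $\|v_{i,k+1:d}\|$ rather than the literal $1$ in \eqref{rrr} --- evidently a typo in the statement). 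Staying with $\sigma^2 M_{22}\bullet Y_2Y_2^T$ and applying Cauchy--Schwarz or H\"older directly, as you note, costs a factor of $d-k$ or $k$; the trace-minus-complement identity anchored by that extra coordinate is the missing idea.
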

\begin{proof}
Let $\eps=\norm{\sum_{i=1}^n (1-W^2_{i,i})v_iv_i^T}.$
For every $i\in[n]$ let $t_i=1-W^2_{i,i}$.
Put $X\in\REAL^{d\times (d-k)}$ such that $X^TX=I$. Without loss of generality we assume $V^T=I$, i.e., $A=U\Sigma$, otherwise we replace $X$ by $V^TX$.
It thus suffices to prove that
\begin{equation}\label{toprov}
|\sum_i t_i \norm{A_{i,:}X}^2|\leq 5\eps\norm{AX}^2.
\end{equation}
Using the triangle inequality,
\begin{align}
\nonumber&|\sum_i t_i\norm{A_{i,:}X}^2|\\
\label{bb}&\leq
\left|\sum_i t_i\norm{A_{i,:}X}^2-\sum_i t_i\norm{(A_{i,1:k},\mathbf{0})X}^2\right|.
\\
\label{aa}&\quad+|\sum_i t_i\norm{(A_{i,1:k},\mathbf{0})X}^2|.
\end{align}
We now bound the last two expressions.

\textbf{Bound on~\eqref{aa}:}
It was proven in~\cite{felphd} that for every pair of $k$-subspaces $S_1,S_2$ in $\REAL^d$ there is $u\geq 0$ and a $(k-1)$-subspace $T\subseteq S_1$ such that the distance from every point $p \in S_1$ to $S_2$ equals to its distance to $T$ multiplied by $u$. By letting $S_1$ denote the $k$-subspace that is spanned by the first $k$ standard vectors of $\REAL^d$, letting $S_2$ denote the $k$-subspace that is orthogonal to each column of $X$, and $y\in\REAL^{k}$ be a unit vector that is orthogonal to $T$, we obtain that for every row vector $p\in\REAL^k$,
\begin{equation}\label{AA}
\norm{(p,\mathbf{0})X}^2=u^2(py)^2.
\end{equation}

After defining  $x=\Sigma_{1:k,1: k} y/\norm{\Sigma_{1:k,1: k} y}$,~\eqref{aa} is bounded by
\begin{align}
&\nonumber\sum_i t_i\norm{(A_{i,1:k},\mathbf{0})X}^2
=\sum_i t_i\cdot u^2\norm{A_{i,1:k}y}^2\\
\nonumber&=u^2\sum_it_i\norm{A_{i,1:k}y}^2\\
\nonumber&=u^2\sum_it_i\norm{U_{i,1:k}\Sigma_{1:k,1: k} y}^2\\
\label{efg}&=u^2\norm{\Sigma_{1:k,1: k} y}^2\sum_it_i\norm{(U_{i,1:k})x}^2.
\end{align}

\vspace{-0.3cm}
The left side of~\eqref{efg} is bounded by substituting $p=\Sigma_{j,1:k}$ in~\eqref{AA} for $j\in[k]$, as
\begin{align}
&\nonumber u^2\norm{\Sigma_{1:k,1:k}y}^2=\sum_{j=1}^k u^2(\Sigma_{j,1:k}y)^2
=\sum_{j=1}^k \norm{(\Sigma_{j,1:k},\mathbf{0})X}^2\\
\nonumber&=\sum_{j=1}^k \sigma^2_j \norm{X_{j,:}}^2\leq \sum_{j=1}^d \sigma^2_d \norm{X_{j,:}}^2\\
&= \norm{\Sigma X}^2=\norm{U\Sigma X}^2
\label{ggg}= \norm{A X}^2.
\end{align}
The right hand side of~\eqref{efg} is bounded by
\begin{align}
&\nonumber|\sum_it_i\norm{(U_{i,1:k})x}^2|
=|\sum_i t_i (U_{i,1:k})^TU_{i,1:k}\bullet xx^T|\\
\nonumber&=|xx^T\bullet \sum_i t_i (U_{i,1:k})^TU_{i,1:k}|\\
&\label{abc}\leq \norm{xx^T}\cdot\norm{\sum_i t_i (U_{i,1:k})^TU_{i,1:k}}\\
\label{efff}&\leq \norm{\sum_i t_i (v_{i,1:k})^Tv_{i,1:k}}\leq \norm{\sum_i t_i v_{i}^Tv_{i}}
=\eps,
\end{align}
where~\eqref{abc} is by the Cauchy-Schwartz inequality and the fact that $\norm{xx^T}=\norm{x}^2=1$, and in~\eqref{efff} we used the assumption $A_{i,j}=U_{i,j}\sigma_j=v_{i,j}$ for every $j\in[k]$.

Plugging~\eqref{ggg} and~\eqref{efff} in~\eqref{efg} bounds~\eqref{aa} as
\begin{equation}
\label{ef}
|\sum_i t_i\norm{(A_{i,1:k},\mathbf{0})X}^2|
\leq \eps\norm{AX}^2.
\end{equation}

\vspace{-0.2cm}
\textbf{Bound on~\eqref{bb}:}
For every $i\in[n]$ we have
\begin{align}
&\nonumber\norm{A_{i,:}X}^2-\norm{(A_{i,1:k},\mathbf{0})X}^2\\
&\nonumber= 2(A_{i,1:k},\mathbf{0})XX^T(\mathbf{0},A_{i,k+1:d})^T\\
\nonumber&\quad+\norm{(\mathbf{0},A_{i,k+1:d})X}^2\\
\nonumber&= 2A_{i,1:k}X_{1:k,:}(X_{k+1:d,:})^T(A_{i,k+1:d})^T\\
\nonumber&\quad+\norm{(\mathbf{0}, A_{i,k+1:d})X}^2\\
\end{align}
\begin{align}
\nonumber&= 2\sum_{j=1}^k A_{i,j}X_{j,:}(X_{k+1:d,:})^T(A_{i,k+1:d})^T\\
\nonumber&\quad+\norm{(\mathbf{0}, A_{i,k+1:d})X}^2\\
\nonumber&=  \sum_{j=1}^k 2\sigma_jX_{j,:}(X_{k+1:d,:})^T\cdot  \norm{\sigma_{k+1:d}}v_{i,j}(v_{i,k+1:d})^T\\
\nonumber&\quad+\norm{\sigma_{k+1:d}}^2\norm{(\mathbf{0},v_{i,k+1:d})X}^2.
\end{align}

\vspace{-0.3cm}
Summing this over $i\in[n]$ with multiplicative weight $t_i$
and using the triangle inequality, will bound~\eqref{bb} by
\begin{align}
\nonumber&\left|\sum_i t_i\norm{A_{i,:}X}^2-\sum_i t_i\norm{(A_{i,1:k},\mathbf{0})X}^2\right|\\
&\label{fff}\leq \Big|\sum_i t_i \sum_{j=1}^k 2\sigma_jX_{j,:}(X_{k+1:d,:})^T\\
\nonumber&\quad\cdot  \norm{\sigma_{k+1:d}}v_{i,j}(v_{i,k+1:d})^T\Big|\\
\label{ggh}&\quad+\left|\sum_i t_i\norm{\sigma_{k+1:d}}^2\norm{(\mathbf{0},v_{i,k+1:d})X}^2\right|.
\end{align}
The right hand side of~\eqref{fff} is bounded by
\begin{align}
\nonumber&\left|\sum_{j=1}^k 2\sigma_jX_{j,:}(X_{k+1:d})^T\cdot
\norm{\sigma_{k+1:d}} \sum_i t_iv_{i,j}(v_{i,k+1:d})^T\right|\\
\label{ab1}&\leq  \sum_{j=1}^k 2\sigma_j\norm{X_{j,:}X_{k+1:d}}\cdot \norm{\sigma_{k+1:d}}\norm{\sum_i t_iv_{i,j}v_{i,k+1:d}}\\
\label{ab2}&\leq \sum_{j=1}^k  (\eps\sigma^2_j\norm{X_{j,:}}^2
 +\frac{\norm{\sigma_{k+1:d}}^2}{\eps} \norm{\sum_i t_i v_{i,j}v_{i,k+1:d}}^2)\\
\label{ee}&\leq 2\eps\norm{AX}^2,
\end{align}
where~\eqref{ab1} is by the Cauchy-Schwartz inequality, ~\eqref{ab2} is by the inequality $2ab\leq a^2+b^2$. In~\eqref{ee} we used the fact that $\sum_i t_i (v_{i,1:k})^Tv_{i,k+1:d}$ is a block in the matrix $\sum_i t_iv_iv_i^T$, and
\begin{equation}\label{opt}
\begin{split}
&\norm{\sigma_{k+1:d}}^2\leq \norm{AX}^2 \text{\quad and } \sum_{j=1}^k\sigma^2_j\norm{X_{j,:}}^2\\
&=\norm{\Sigma_{1:k,1:k}X_{1:k,:}}^2\leq \norm{\Sigma X}^2\leq \norm{AX}^2.
\end{split}
\end{equation}
Next, we bound ~\eqref{ggh}.
Let $Y\in \REAL^{d\times k}$ such that $Y^TY=I$ and $Y^TX=\mathbf{0}$. Hence, the columns of $Y$ span the $k$-subspace that is orthogonal to each of the $(d-k)$ columns of $X$.
By using the Pythagorean Theorem and then the triangle inequality,
\begin{align}
&\norm{\sigma_{k+1:d}}^2\label{aab} |\sum_i t_i \norm{(\mathbf{0},v_{i,k+1:d})X}^2|\\
=&\nonumber \norm{\sigma_{k+1:d}}^2|\sum_i t_i \norm{(\mathbf{0},v_{i,k+1:d})}^2\\
\nonumber &\quad-
\sum_i t_i \norm{(\mathbf{0},v_{i,k+1:d})Y}^2|
\\
&\leq \label{c1}\norm{\sigma_{k+1:d}}^2|\sum_i t_i \norm{v_{i,k+1:d}}^2|\\
&\quad+
\label{c2}\norm{\sigma_{k+1:d}}^2|\sum_i t_i \norm{(\mathbf{0},v_{i,k+1:d})Y}^2|.
\end{align}
For bounding~\eqref{c2}, observe that $Y$ corresponds to a $(d-k)$ subspace, and $(\mathbf{0},v_{i,k+1:d})$ is contained in the $(d-k)$ subspace that is spanned by the last $(d-k)$ standard vectors. Using same observations as above~\eqref{AA}, there is a unit vector $y\in\REAL^{d-k}$ such that for every $i\in[n]$
$
\norm{(\mathbf{0},v_{i,k+1:d})Y}^2
=\norm{(v_{i,k+1:d})y}^2
$. Summing this over $t_i$ yields,
\[
\begin{split}
&|\sum_i t_i\norm{(\mathbf{0},v_{i,k+1:d})Y}^2|
=|\sum_i t_i\norm{v_{i,k+1:d}y}^2|\\
&=|\sum_i t_i \sum_{j=k+1}^d  v^2_{i,j}y_{j-k}^2|
\label{cd}=|\sum_{j=k+1}^d y_{j-k}^2 \sum_i t_i  v^2_{i,j}|.
\end{split}
\]
Replacing~\eqref{c2} in~\eqref{aab} by the last inequality yields

\begin{algorithm}[t]
   \caption{$\SVDCoreset(A,k,\eps)$}
   \label{alg:1}
\begin{algorithmic}
\STATE {\bfseries Input:} $A$: $n$ input points $A_{1 \ldots n}$ in $\REAL^d$
\STATE {\bfseries Input:} $k$: the approximation rank
\STATE {\bfseries Input:} $\eps$: the nominal approximation error
\STATE {\bfseries Output:} $w\in[0,\infty)^n$: non-negative weights vector.

\STATE Compute $UDV^T=A$, the SVD of $A$

\STATE $R \gets D_{k+1:d,k+1: d}$
\STATE $P \gets$ matrix with $i\in[n]$ row is $(U_{i,1:k},U_{i,k+1: d} \cdot \frac{R}{\norm{R}_F})$
\STATE $X \gets$ matrix with $i\in[n]$ row is $X_i = P_i/\norm{P_i}$
\STATE $w \gets (1,0,\cdots,0)$

\FOR{$i = 1,\ldots,{k}/{\eps^2}$}
\STATE $j \gets \mathrm{argmin}_{i}\set{{wX X_i}}$
\STATE $a = \sum_{i=1}^n w_i (X_i^T X_j)^2$
\STATE $b =  \rbr{1 - \norm{PX_j}_F^2 + \sum_{i=1}^n w_i \norm{PX_i}_F^2} / {\norm{P}_F^2}$
\STATE $c = \norm{wX}_F^2$
\STATE $\alpha = \rbr{1 - a + b}/\rbr{1 + c - 2a}$
\STATE $w \gets (1-\alpha)I_{j,:} + \alpha w$

\ENDFOR
\STATE {\bfseries return} $w$

\end{algorithmic}
\vspace{-0.1cm}
\end{algorithm}
\vspace{-0.3cm}

\begin{align}
&\nonumber\norm{\sigma_{k+1:d}}^2|\sum_i t_i \norm{(\mathbf{0},v_{i,k+1:d})X}^2|\\
\label{aacd}&\leq
\norm{\sigma_{k+1:d}}^2 (|\sum_i t_i v^2_{i,d+1}|
+\sum_{j=k+1}^d y_{j-k}^2 \norm{\sum_i t_i  v_{i}v_i^T})\\
&\leq \norm{\sigma_{k+1:d}}^2(\eps
+\eps\sum_{j=k+1}^d y_{j-k}^2 )\label{ax}\leq 2\eps\norm{AX}^2,
\end{align}
where~\eqref{aacd} follows since $\sum_i t_i v_{i,j}^2$ is an entry in the matrix $\sum_i t_i v_iv_i^T$,
in~\eqref{ax} we used~\eqref{opt} and the fact that $\norm{y}^2=1$.
Plugging~\eqref{ee} in~\eqref{fff} and~\eqref{ax} in\eqref{ggg} gives the desired bound on~\eqref{bb} as
\[
|\sum_i t_i \norm{A_{i,:}X}^2-\sum_i t_i\norm{(A_{i,1:k},\mathbf{0})X}^2|
\leq  4\eps\norm{AX}^2.
\]
Finally, using~\eqref{ef} in~\eqref{aa} and the last inequality in~\eqref{bb}, proves the desired bound of~\eqref{toprov}.
\end{proof}

\subsection{ Implementation.} We now give a brief overview that bridges the gap between the theoretical results and the practical implementation presented in Algorithm 1.
The coreset construction described in this section use storage of $O(d^2)$ per point. However the suggested implementation uses $O(d)$ space for point. To achieve this we leverage several observations: (1) we do not need to compute the center, only its norm, which can we can update recursively from a starting value of 1; (2) the term $\sum_i w_i (x_i x_j^T)^2$ only needs to be computed $O(k/\eps^2)$ times, and its value stored for further iterations that find the same farthest point.

The algorithm works as follows. First we restructure the input points $A$ using their $k$-rank SVD decomposition $UDV^T$ into a new input matrix $P$, and normalize it to get $X$. We arbitrarily select starting point $X_j = X_1$ and set its weight to $w_j = 1$, with all other weights set to zero. Next, we compute the farthest point from $X_j$ by projecting the weighted points onto the current point. The following expressions $a,b,c$ contain all the update steps in order to compute the norm of the new center recursively without computing the actual center itself (an $O(d^2)$ computation). Finally, the value of $\alpha$ is updated based on the ratio of distances from the current point, current center, and new center, and the weights are updated based on the new $\alpha$. The algorithm runs for $k/\eps^2$ iterations, or until $\alpha$ has converged to 1.
 The output of the algorithm is a sparse vector of weights satisfying the guarantees.

\section{Conclusion}
\label{sec: conclusion}

We present a new approach for dimensionality reduction using coresets.
The key
feature of our algorithm is that it computes coresets that are small in size and subsets of the original data.
Using synthetic data as ground truth we show that our algorithm provides a good approximation. 


\bibliographystyle{plain}
\bibliography{references}

\end{document}